
\documentclass[conference,letter]{IEEEtran}
\IEEEoverridecommandlockouts

\usepackage{color}

\usepackage{graphicx,tabularx,array,amsmath,amsthm,thmtools}

\usepackage{mathtools}

\usepackage{caption}
\usepackage{amsfonts}
\usepackage{bm}
\usepackage{bbm}
\usepackage{makecell}
\usepackage{multirow}
 \usepackage{amssymb}
\usepackage{txfonts}
\usepackage[T1]{fontenc}
\usepackage{tikz}
\usepackage[scr=dutchcal]{mathalfa}

\usepackage{cite}

\newtheorem{Theorem}{Theorem}
\newtheorem{Proposition}{Proposition}
\newtheorem{Lemma}{Lemma}

\newtheorem{Example}{Example}
\newtheorem{Remark}{Remark}
\newtheorem{Definition}{Definition}

\ifCLASSINFOpdf
\else
\fi
\hyphenation{op-tical net-works semi-conduc-tor}

\begin{document}
%
\title{Typicality Matching for Pairs of Correlated Graphs}

\author{\IEEEauthorblockN{ Farhad Shirani}
\IEEEauthorblockA{Department of Electrical\\and Computer Engineering\\
New York University\\
New York, New York, 11201\\
Email: fsc265@nyu.edu}
\and

\IEEEauthorblockN{Siddharth Garg}
\IEEEauthorblockA{Department of Electrical\\and Computer Engineering\\
New York University\\
New York, New York, 11201\\ 
Email: siddharth.garg@nyu.edu}

\and

\IEEEauthorblockN{Elza Erkip }
\IEEEauthorblockA{Department of Electrical\\and Computer Engineering\\
New York University\\
New York, New York, 11201\\
Email: elza@nyu.edu}
}


%


\maketitle

\begin{abstract}
 In this paper, the problem of matching pairs of correlated random graphs with multi-valued edge attributes is considered.  Graph matching problems of this nature arise in several settings of practical interest including social network de-anonymization, study of biological data, web graphs, etc. An achievable region for successful matching is derived by analyzing a new matching algorithm that we refer to as typicality matching. The algorithm operates by investigating the joint typicality of the adjacency matrices of the two correlated graphs. Our main result shows that the achievable region depends on the mutual information between the variables corresponding to the edge probabilities of the two graphs. The result is based on bounds on the typicality of permutations of sequences of random variables that might be of independent interest.

\end{abstract}
\section{Introduction}

Graphical models emerge naturally in a wide range of phenomena including social interactions, database systems, and biological systems. In many applications such as DNA sequencing, pattern recognition, and image processing, it is desirable to find algorithms to match correlated graphs. In other applications, such as social networks and database systems, privacy considerations require the network operators to preclude de-anonymization using graph matching by enforcing security safeguards. As a result, there is a large body of work dedicated to characterizing the fundamental limits of graph matching (i.e. to determine the necessary and sufficient conditions for reliable matching), as well as the design of efficient algorithms to achieve these limits.

In the graph matching problem, an agent is given a pair of correlated graphs: i) an `anonymized' unlabeled graph, and ii) a `de-anonymized' labeled graph. The agent's objective is to recover the correct labeling of the vertices in the anonymized graph by matching its vertex set to that of the de-anonymized graph. This is shown in Figure \ref{fig:graph_matching}. This problem has been considered under varying assumptions on the joint graph statistics. Graph isomorphism studied in ~\cite{iso1,iso2,iso4} is an instance of the matching problem where the two graphs are identical copies of one another. Under the Erd\"os- R\'enyi graph model tight necessary and sufficient conditions for graph isomorphism have been derived \cite{ER,wright} and polynomial time algorithms have been proposed ~\cite{iso1,iso2,iso4}. The problem of matching non-identical pairs of correlated Erd\"os-R\'enyi graphs have been studied in ~\cite{corr1,corr2,corr3,corr4, kia_2017, Lyzinski_2016, Asilomar}. Furthermore, graphs with community structure have been considered in ~\cite{corr_com_1,kiavash,efe,Grossglauser}. Seeded versions of the graph matching problem, where the agent has access to side information in the form of partial labelings of the unlabeled graph have also been studied in ~\cite{seed1,seed2,seed3,seed4, Asilomar}. While great progress has been made in characterizing the fundamental limits of graph matching, many of the methods in the literature are designed for specific graph models such as pairs of Erd\H{o}s-R\'enyi graphs with binary valued edges and are not extendable to general scenarios.

 \begin{figure}
 \begin{center}
\includegraphics[width=0.49\textwidth]{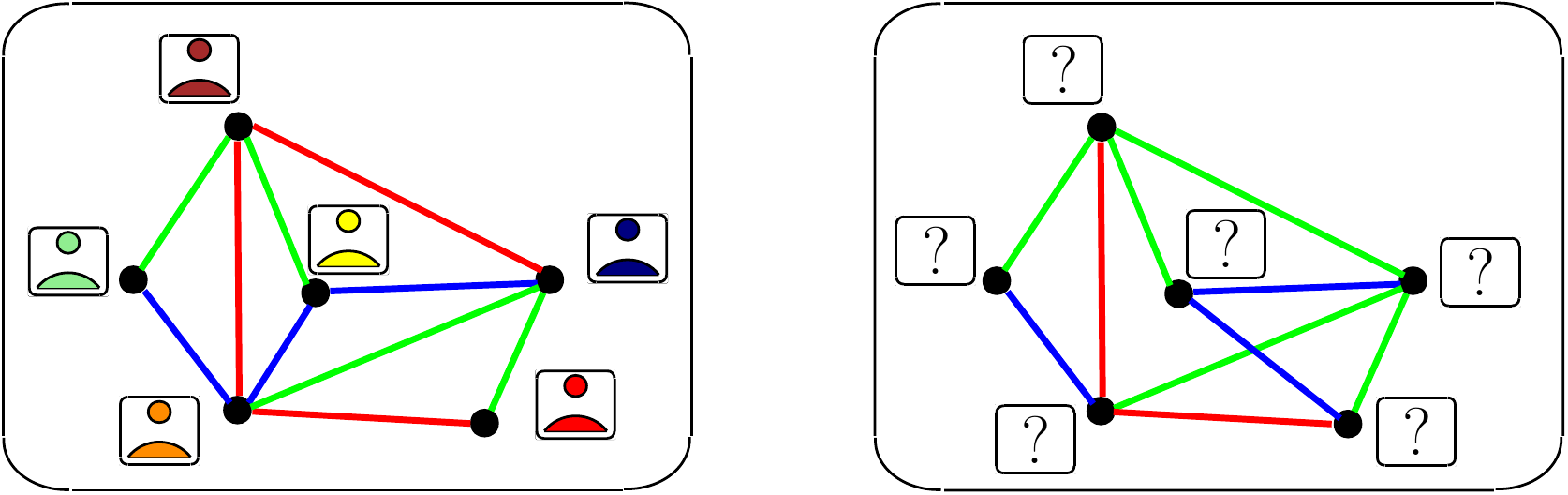}
\caption{An instance of the graph matching problem where the anonymized graph on the right is to be matched to the de-anonymized graph on the left. The edges take values in the set $[0,3]$. The edges with value $0$ are represented by vertex pairs which are not connected and the edges taking values $\{1,2,3\}$ are represented by the colored edges.}
\label{fig:graph_matching}
\end{center}
\end{figure}

In this work, we propose a new approach for analyzing graph matching problems based on the concept of typicality in information theory \cite{csiszarbook}. The proposed approach finds a labeling for the vertices in the anonymized graph which results in a pair of jointly typical adjacency matrices for the two graphs, where typicality is defined with respect to the induced joint statistics of the adjacency matrices. It is shown that if $$I(X_1;X_2)= \omega\left(\frac{\log{n}}{n}\right),$$ 
then it is possible to label the vertices in the anonymized graph such that almost all of the vertices are labeled correctly, where $I(X_1;X_2)$ represents the mutual information between the edge distributions in the two graphs and $n$ is the number of vertices.

The proposed approach is general and leads to a matching strategy which is applicable under a wide range of statistical models. In addition to yielding sufficient conditions for matching correlated random graphs with multi-valued edges, our analysis also includes investigating the typicality of permutations of sequences of random variables which is of independent interest.

The rest of the paper is organized as follows: Section II contains the problem formulation. Section III develops the mathematical machinery necessary to analyze the new matching algorithm. Section IV introduces the typicality matching algorithm and evaluates its performance. Section V concludes the paper. 

\section{Problem Formulation}
In this section, we provide our formulation of the graph matching problem. There are two aspects of our formulation that differ from the ones considered in ~\cite{corr1,corr2,corr3,corr4, kia_2017, Lyzinski_2016, Asilomar}. First, we consider graphs with multi-valued (i.e. not necessarily binary-valued) edges. Second, we consider a relaxed criteria for successful matching. In prior works, a matching algorithm is said to succeed if every vertex in the anonymized graph is matched correctly to the corresponding vertex in the de-anonymized graph with vanishing probability of error. In our formulation, a matching algorithm is successful if a randomly and uniformly chosen vertex in the graph is matched correctly with vanishing probability of error. Loosely speaking, this requires that \textit{almost} all of the vertices be matched correctly.

We consider graphs whose edges take multiple values. Graphs with multi-valued edges appear naturally in various applications where relationships among entities have attributes such as social network de-anonymization, study of biological data, web graphs, etc. An edge which has an attribute assignment is called a \emph{marked edge}. 
The following defines an unlabeled graph whose edges take $l$ different values where $l\geq 2$.
\begin{Definition}
 An $(n,l)$-unlabeled graph $g$ is a structure $(\mathcal{V}_n,\mathcal{E}_n)$, where $n\in \mathbb{N}$ and $l\geq 2$. The set $\mathcal{V}_n=\{v_{n,1},v_{n,2},\cdots,v_{n,n}\}$ is called the vertex set, and the set $\mathcal{E}_n\subset\{(x,v_{n,i},v_{n,j})|x\in [0,l-1], i\in [1,n], j\in[1,n]\}$ is called the marked edge set of the graph. For the marked edge $(x,v_{n,i},v_{n,j})$ the variable `$x$' represents the value assigned to the edge $(v_{n,i},v_{n,j})$ between vertices $v_{n,i}$ and $v_{n,j}$.
\end{Definition}

Without loss of generality, we assume that for any arbitrary pair of vertices $(v_{n,i},v_{n,j})$, there exists a unique $x\in [0,l-1]$ such that $(x,v_{n,i},v_{n,j})\in \mathcal{E}_n$. As an example, for graphs with binary valued edges if the pair $v_{n,i}$ and $v_{n,i}$ are not connected, we write $(0,v_{n,i},v_{n,j})\in \mathcal{E}_n$, otherwise $(1,v_{n,i},v_{n,j})\in \mathcal{E}_n$.

\begin{Definition}
  For an $(n,l)$-unlabeled graph $g=(\mathcal{V}_n,\mathcal{E}_n)$, a labeling is defined as a bijective function $\sigma: \mathcal{V}_n\to [1,n]$.  
The structure $\tilde{g}=(g, \sigma)$ is called an $(n,l)$-labeled graph. For the labeled graph $\tilde{g}$ the adjacency matrix is defined as $\widetilde{G}_{\sigma}=[\widetilde{G}_{\sigma,i,j}]_{i,j\in [1,n]}$ where $\widetilde{G}_{\sigma,i,j}$ is the unique value such that $(\widetilde{G}_{\sigma,i,j},\sigma^{-1}(i),\sigma^{-1}(j))\in \mathcal{E}_n$. The upper triangle (UT) corresponding to $\tilde{g}$ is the structure $\widetilde{G}^{UT}_{\sigma}=[\widetilde{G}_{\sigma,i,j}]_{i<j}$.
\end{Definition}
Any pair of labelings are related through a permutation as described below.
\begin{Definition}
For two labelings $\sigma$ and $\sigma'$, the $(\sigma,\sigma')$-permutation is defined as the bijection $\pi_{(\sigma,\sigma')}$, where:
\begin{align*}
\pi_{(\sigma,\sigma')}(i)=j, \quad \text{if} \quad {\sigma'}^{-1}(j)=\sigma^{-1}(i), \forall i,j\in [1,n].
\end{align*}
\label{def:above}
\end{Definition}
Proposition \ref{prop:1} given bellow follows from Definition \ref{def:above}.
\begin{Proposition}
 Given an $(n,l)$-unlabeled graph $g=(\mathcal{V}_n,\mathcal{E}_n)$ and a pair of arbitrary permutations $\sigma,\sigma'\in S_n$, the adjacency matrices corresponding to $\tilde{g}=(g, \sigma)$ and $\tilde{g}'=(g, \sigma')$ satisfy the following equality:
\begin{align*}
 \widetilde{G}_{\sigma,i,j}=  \widetilde{G}_{\sigma',\pi_{(\sigma,\sigma')}(i),\pi_{(\sigma,\sigma')}(j)}.
\end{align*}
We write $ \widetilde{G}_{\sigma'}=\Pi_{(\sigma,\sigma')}(\widetilde{G}_{\sigma})$, where $\Pi_{(\sigma,\sigma')}$ is an $n^2$-length permutation.
Similarly, we write $ {\widetilde{G'}}^{UT}_{\sigma}=\Pi^{UT}_{(\sigma,\sigma')}(\widetilde{G}^{UT}_{\sigma})$.
\label{prop:1}
\end{Proposition}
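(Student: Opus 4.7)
The plan is to prove the identity by unwinding the definitions in the cleanest possible order. The key observation, which I would state first, is that the bijection $\pi_{(\sigma,\sigma')}$ admits the explicit description $\pi_{(\sigma,\sigma')}(i) = \sigma'(\sigma^{-1}(i))$ for every $i \in [1,n]$. This follows immediately from Definition~\ref{def:above}: applying $\sigma'$ to both sides of the equation ${\sigma'}^{-1}(j)=\sigma^{-1}(i)$ yields $j = \sigma'(\sigma^{-1}(i))$. Equivalently, ${\sigma'}^{-1}(\pi_{(\sigma,\sigma')}(i)) = \sigma^{-1}(i)$, which will be the workhorse identity for the rest of the argument.

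Next, I would recall the definition of the adjacency entries. By construction, $\widetilde{G}_{\sigma,i,j}$ is the unique element $x \in [0,l-1]$ satisfying $(x,\sigma^{-1}(i),\sigma^{-1}(j)) \in \mathcal{E}_n$, and likewise $\widetilde{G}_{\sigma',\pi_{(\sigma,\sigma')}(i),\pi_{(\sigma,\sigma')}(j)}$ is the unique element $x' \in [0,l-1]$ satisfying $(x',{\sigma'}^{-1}(\pi_{(\sigma,\sigma')}(i)),{\sigma'}^{-1}(\pi_{(\sigma,\sigma')}(j))) \in \mathcal{E}_n$. Substituting the workhorse identity into the second expression collapses it to $(x',\sigma^{-1}(i),\sigma^{-1}(j)) \in \mathcal{E}_n$. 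By the uniqueness assumption that for every vertex pair there is exactly one valid mark, $x' = x$, which gives the stated equality.

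Passing from the scalar equality to the matrix statements $\widetilde{G}_{\sigma'} = \Pi_{(\sigma,\sigma')}(\widetilde{G}_{\sigma})$ and ${\widetilde{G}}^{UT}_{\sigma'} = \Pi^{UT}_{(\sigma,\sigma')}(\widetilde{G}^{UT}_{\sigma})$ is then just a matter of repackaging: the pointwise identity above says exactly that the $(i,j)$ entry of $\widetilde{G}_{\sigma}$ is moved to the $(\pi_{(\sigma,\sigma')}(i),\pi_{(\sigma,\sigma')}(j))$ entry of $\widetilde{G}_{\sigma'}$, which is what the notation $\Pi_{(\sigma,\sigma')}$ is meant to capture. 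For the upper triangle version I would only need to remark that, since $\pi_{(\sigma,\sigma')}$ is a bijection on $[1,n]$, the induced permutation on index pairs maps $\{(i,j) : i<j\}$ onto itself after symmetrization, so the restriction $\Pi^{UT}_{(\sigma,\sigma')}$ is well defined and inherits the same equality.

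There is no real obstacle here; the statement is essentially a definitional tautology once one writes $\pi_{(\sigma,\sigma')} = \sigma' \circ \sigma^{-1}$. The only place to be careful is the bookkeeping between a permutation on $n$ indices and the induced permutation $\Pi_{(\sigma,\sigma')}$ on the $n^2$ entries of the adjacency matrix, so in the write-up I would make that correspondence explicit to avoid ambiguity when this proposition is invoked later in the typicality analysis.
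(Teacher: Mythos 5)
Your proof is correct and matches the paper's treatment: the paper offers no explicit proof, simply asserting that the proposition follows from Definition~\ref{def:above}, and your argument is exactly the definitional unwinding (via $\pi_{(\sigma,\sigma')}=\sigma'\circ\sigma^{-1}$ and the uniqueness of the mark on each vertex pair) that the authors leave implicit. Your extra care about the induced $n^2$-entry permutation and the symmetrization needed for the upper-triangle version is a reasonable addition, not a deviation.
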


\begin{Definition}
Let the random variable $X$ be defined on the probability space $(\mathcal{X},P_X)$, where $\mathcal{X}=[0,l-1]$. A marked Erd\"os-R\'enyi (MER) graph $g_{n,P_X}$ is a randomly generated $(n,l)$-unlabeled graph with vertex set $\mathcal{V}_n$ and edge set $\mathcal{E}_n$, such that
\begin{align*}
 Pr((x,v_{n,i},v_{n,j})\in \mathcal{E}_n)= P_X(x), \forall x\in [1,l-1], v_{n,i},v_{n,j}\in \mathcal{V}_n,
\end{align*}
and edges between different vertices are mutually independent. 

\end{Definition}

We consider families of correlated pairs of marked labeled Erd\"os-R\'enyi graphs $\underline{g}_{n,P_{n,X_1,X_2}}=(\tilde{g}^1_{n,P_{X_1}},\tilde{g}^2_{n,P_{X_2}})$. 

\begin{Definition}
Let the pair of random variables $(X_1,X_2)$ be defined on the probability space $(\mathcal{X}\times\mathcal{X}, P_{n,X_1,X_2})$, where $\mathcal{X}=[0,l-1]$. A correlated pair of marked labeled Erd\"os-Renyi graphs (CMER)  $\tilde{\underline{g}}_{n,P_{X_1,X_2}}=(\tilde{g}^1_{n,P_{X_1}},\tilde{g}^2_{n,p_{X_2}})$ is characterized by: i) the pair of marked Erd\"os-Renyi graphs $g^i_{n,P_{X_i}}, i\in \{1,2\}$, ii) the pair of labelings $\sigma^i$ for the unlabeled graphs $g^i_{n,P_{X_i}}, i\in \{1,2\}$, and iii) the probability distribution $P_{X_1,X_2}(x_1,x_2), (x_1,x_2) \in \mathcal{X}\times\mathcal{X}$, such that: 
\\1)The pair $g^i_{n,P_{X_i}}, i\in \{1,2\}$ have the same set of vertices $\mathcal{V}_n=\mathcal{V}_n^1=\mathcal{V}_n^2$. 
\\2) For any two marked edges $e^i=(x_i,v^i_{n,s_1},v^i_{n,s_2}), i\in \{1,2\}, x_1,x_2\in [0,l-1]$, we have
 \begin{align*}
 &Pr\left(e^1\in \mathcal{E}_n^1, e^2\in \mathcal{E}_n^2\right)=
 \\&
\begin{cases}
P_{X_1,X_2}(x_1,x_2),& \text{if } \sigma^1(v^1_{n,s_j})=\sigma^2(v^2_{n,s_j}), j\in \{1,2\}\\
P_{X_1}(x_1)P_{X_2}(x_2), & \text{Otherwise}
\end{cases}.
\end{align*} 
\end{Definition}

\begin{Definition}
For a given joint distribution $P_{X_1,X_2}$, a correlated pair of marked partially labeled Erd\"os-Renyi graphs (CMPER)  $\underline{g}_{n,P_{X_1,X_2}}=(\tilde{g}^1_{n,P_{X_1}},{g}^2_{n,P_{X_2}})$  is characterized by: i) the pair of marked Erd\"os-Renyi graphs $g^i_{n,P_{X_i}}, i\in \{1,2\}$, ii) a labeling $\sigma^1$ for the unlabeled graph $g^1_{n,p_{X_1}}$, and iii) a probability distribution $P_{X_1,X_2}(x_1,x_2), (x_1,x_2) \in \mathcal{X}\times\mathcal{X}$,  such that there exists a labeling $\sigma^2$ for the graph $g^2_{n,P_{X_2}}$ for which $(\tilde{g}^1_{n,P_{X_1}},\tilde{g}^2_{n,P_{X_2}})$ is a CMER, where $\tilde{g}^2_{n,P_{X_2}}\triangleq (g^2_{n,P_{X_2}},\sigma^2)$.
\end{Definition}


The following defines a matching algorithm:

\begin{Definition}
 A matching algorithm for the family of CMPERs $\underline{g}_{n,P_{n,X_1,X_2}}=(\tilde{g}^1_{n,P_{n,X_1}},{g}^2_{n,P_{n,X_2}}), n\in \mathbb{N}$ is a sequence of functions 
 $f_n:\underline{g}_{n,P_{n,X_1,X_2}}\mapsto \hat{\sigma}_n^2$ such that $P\left(\sigma_n^2(v^2_{n,J})=\hat{\sigma}_n^2(v^2_{n,J})\right)\to 1$ as $n\to \infty$, where the random variable $J$ is uniformly distributed over $[1,n]$ and   $\sigma^2_n$ is the labeling  for the graph $g^2_{n,P_{n,X_2}}$ for which $(\tilde{g}^1_{n,P_{n,X_1}},\tilde{g}^2_{n,P_{n,X_2}})$ is a CMER, where $\tilde{g}^2_{n,P_{n,X_2}}\triangleq (g^2_{n,P_{n,X_2}},\sigma_n^2)$.
\label{def:algo}
\end{Definition}
Note that in the above definition, for $f_n$ to be a matching algorithm, the fraction of vertices whose labels are matched incorrectly must vanish as $n$ approaches infinity. This is a relaxation of  the criteria considered in \cite{corr1,corr2,corr3,corr4, kia_2017, Lyzinski_2016, Asilomar} where all of the vertices are required to be matched correctly simultaneously with vanishing probability of error as $n\to \infty$.

The following defines an achievable region for the graph matching problem.
\begin{Definition}
 For the graph matching problem, a family of sets of distributions $\widetilde{P}=(\mathcal{P}_n)_{n\in \mathbb{N}}$ is said to be in the achievable region if for every sequence of distributions $P_{n,X_1,X_2}\in \mathcal{P}_n, n\in \mathbb{N}$, there exists a matching algorithm.
 \end{Definition}

\section{Permutations of Typical Sequences}
In this section, we develop the mathematical tools necessary to analyze the performance of the typicality matching strategy. In summary, the typicality matching strategy operates as follows. Given a CMPER $\underline{g}_{n,P_{n,X_1,X_2}}$ the strategy finds a labeling $\hat{\sigma}^2$ such that the pair of adjacency matrices $(\widetilde{G}^{1}_{\sigma^1}, \widetilde{G}^{2}_{\hat{\sigma}^2})$ are jointly typical with respect to $P_{n,X_1,X_2}$, where joint typicality is defined in Section \ref{sec:algo}. Each labeling  $\hat{\sigma}^2$ gives a permutation of the adjacency matrix $\widetilde{G}^{2}_{{\sigma}^2}$. Hence, analyzing the performance of the typicality matching strategy requires bounds on the probability of typicality of permutations of correlated pairs of sequences of random variables. The necessary bounds are derived in this section. The details of typicality matching and its performance are described in Section \ref{sec:algo}.
\begin{Definition}
Let the pair of random variables $(X,Y)$ be defined on the probability space $(\mathcal{X}\times\mathcal{Y},P_{X,Y})$, where $\mathcal{X}$ and $\mathcal{Y}$ are finite alphabets. The $\epsilon$-typical set of sequences of length $n$ with respect to $P_{X,Y}$ is defined as:
\begin{align*}
&A_{\epsilon}^n(X,Y)=
\\&\Big\{(x^n,y^n): \Big|\frac{1}{n}N(\alpha,\beta|x^n,y^n)-P_{X,Y}(\alpha,\beta)\Big|\leq \epsilon, \forall (\alpha,\beta)\in \mathcal{X}\times\mathcal{Y}\Big\},
\end{align*}
where $\epsilon>0$, $n\in \mathbb{N}$, and $N(\alpha,\beta|x^n,y^n)= \sum_{i=1}^n \mathbbm{1}\left((x_i,y_i)=(\alpha,\beta)\right)$.  
\end{Definition}  
We follow the notation used in \cite{isaacs} in our study of permutation groups.
\begin{Definition}
\label{def:perm1}
A permutation on the set of numbers $[1,n]$ is a bijection $\pi:[1,n]\to [1,n]$. The set of all permutations on the set of numbers $[1,n]$ is denoted by $S_n$. 
\end{Definition}
\begin{Definition}
 A permutation $\pi \in S_n, n\in \mathbb{N}$  is called a cycle if there exists $m\in [1,n]$ and $\alpha_1,\alpha_2,\cdots,\alpha_m\in [1,n]$ such that i) $\pi(\alpha_i)=\alpha_{i+1}, i\in [1,m-1]$, ii) $\pi(\alpha_n)=\alpha_1$, and iii) $\pi(\beta)=\beta$ if $\beta\neq \alpha_i, \forall i\in [1,m]$. The variable $m$ is called the length of the cycle. The element $\alpha$ is called a fixed point of the permutation if $\pi(\alpha)=\alpha$. We write $\pi=(\alpha_1,\alpha_2,\cdots,\alpha_m)$.
 The permutation $\pi$ is called a non-trivial cycle if $m\geq 2$. 
\end{Definition}

\begin{Lemma}\cite{isaacs}
 Every permutation $\pi \in S_n, n\in \mathbb{N}$ has a unique representation as a product of disjoint non-trivial cycles.  
\end{Lemma}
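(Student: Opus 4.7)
The plan is to prove existence and uniqueness separately, with both parts pivoting on the orbit structure induced by $\pi$ on $[1,n]$. Define the relation $i \sim j$ on $[1,n]$ by $j = \pi^k(i)$ for some $k \in \mathbb{Z}$. Since $\pi$ is a bijection on a finite set, each $\pi^k$ is well-defined and invertible, so $\sim$ is easily checked to be reflexive, symmetric, and transitive. The equivalence classes partition $[1,n]$ into the orbits of $\pi$.

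For existence, I would take any orbit $O = \{\alpha_1, \pi(\alpha_1), \pi^2(\alpha_1), \ldots\}$. Because $[1,n]$ is finite, the sequence $\alpha_1, \pi(\alpha_1), \pi^2(\alpha_1), \ldots$ eventually repeats; letting $m$ be the least positive integer with $\pi^m(\alpha_1) = \alpha_1$, the orbit equals $\{\alpha_1,\alpha_2,\ldots,\alpha_m\}$ with $\alpha_{i+1} = \pi(\alpha_i)$. Define $c_O$ to be the cycle $(\alpha_1,\alpha_2,\ldots,\alpha_m)$ if $m \geq 2$, and the identity if $m = 1$ (a fixed point). Because the orbits are pairwise disjoint and cover $[1,n]$, the cycles $c_O$ (over orbits with $m \geq 2$) have disjoint supports and commute, so their product is a well-defined permutation. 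I would then verify that this product agrees with $\pi$ element-by-element: on each $\alpha \in O$, only $c_O$ acts nontrivially, and $c_O(\alpha) = \pi(\alpha)$ by construction; on fixed points, both the product and $\pi$ act as the identity.

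For uniqueness, suppose $\pi = c_1 c_2 \cdots c_r = d_1 d_2 \cdots d_s$ are two decompositions into pairwise disjoint non-trivial cycles. I would observe that if $\alpha$ appears in some $c_i = (\alpha_1,\ldots,\alpha_{m})$ with $\alpha = \alpha_j$, then disjointness forces the other $c_{i'}$ to fix $\alpha_j,\pi(\alpha_j),\ldots,\pi^{m-1}(\alpha_j)$, so the support of $c_i$ is exactly the $\pi$-orbit of $\alpha$, and $c_i$ sends $\alpha_j \mapsto \pi(\alpha_j)$ for each $j$. The same argument applied to the $d_{i'}$ containing $\alpha$ shows that $d_{i'}$ is supported on the same orbit and acts identically on it, hence $c_i = d_{i'}$. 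Since every non-fixed point of $\pi$ lies in exactly one cycle on each side, the two decompositions contain the same cycles, differing only by the order of the factors.

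I do not expect a real obstacle here; the only subtle point is making sure that the cycles $c_O$ built from different orbits actually commute and that their product reproduces $\pi$ on every input, which follows cleanly from disjointness of supports. The finiteness of $[1,n]$ is what guarantees that each orbit closes up into a cycle of finite length $m$, so no separate argument for infinite orbits is needed.
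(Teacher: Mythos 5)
Your proof is correct and is the standard orbit-decomposition argument; the paper itself gives no proof of this lemma, simply citing Isaacs, and your argument is essentially the one found in that reference (orbits of the cyclic group generated by $\pi$ yield the cycles, disjointness gives commutativity and uniqueness up to reordering). No issues.
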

\begin{Definition}
\label{def:perm2}
 For a given sequence $y^n\in \mathbb{R}^n$ and permutation $\pi\in S_n$, the sequence $z^n=\pi(y^n)$ is defined as $z^n=(y_{\pi(i)})_{i\in [1,n]}$.\footnote{Note that in Definitions \ref{def:perm1} and \ref{def:perm2} we have used  $\pi$ to denote both a scalar function which operates on the set $[1,n]$ as well as a function which operates on the vector space $\mathbb{R}^n$.}
\end{Definition}
For a correlated pair of independent and identically distributed (i.i.d) sequences $(X^n,Y^n)$ and an arbitrary permutation $\pi\in S_n$, we are interested in bounding the probability $P((X^n,\pi(Y^n))\in A_{\epsilon}^n(X,Y))$. As an intermediate step, we first find a suitable permutation $\pi'$ for which $P((X^n,\pi(Y^n))\in A_{\epsilon}^n(X,Y))\leq P((X^n,\pi'(Y^n))\in A_{\epsilon}^n(X,Y))$.
In our analysis, we make extensive use of the standard permutations defined below.
\begin{Definition}
For a given $n,m,c\in \mathbb{N}$, and $1\leq i_1\leq i_2\leq \cdots\leq i_c \leq n$ such that $n=\sum_{j=1}^ci_j+m$, an $(m,c,i_1,i_2,\cdots,i_c)$-permutation is a permutation in $S_n$ which has $m$ fixed points and $c$ disjoint cycles with lengths $i_1,i_2,\cdots,i_c$, respectively.

The $(m,c,i_1,i_2,\cdots,i_c)$-standard permutation is defined as the $(m,c,i_1,i_2,\cdots,i_c)$-permutation consisting of the cycles $(\sum_{j=1}^{k-1}i_j+1,\sum_{j=1}^{k-1}i_j+2,\cdots,\sum_{j=1}^{k}i_j), k\in [1,c]$. Alternatively, the $(m,c,i_1,i_2,\cdots,i_c)$-standard permutation is defined as:
\begin{align*}
\pi&=(1,2,\cdots,i_1)(i_1+1,i_1+2,\cdots,i_1+i_2)\cdots 
\\&(\sum_{j=1}^{c-1}i_j+1,\sum_{j=1}^{c-1}i_j+2,\cdots,\sum_{j=1}^{c}i_j)(n-m+1)(n-m+2)\cdots (n).
\end{align*}
\label{def:stan_perm}
\end{Definition}
\begin{Example}
The $(2,2,3,2)$-standard permutation is a permutation which has $m=2$ fixed points and $c=2$ cycles. The first cycle has length $i_1=3$ and the second cycle has length $i_2=2$. It is a permutation 
on sequences of length $n=\sum_{j=1}^ci_j+m=3+2+2=7$. The permutation is given by $\pi= (1 2 3)(4 5)(6)(7)$. For an arbitrary sequence $\underline{\alpha}=(\alpha_1,\alpha_2,\cdots,\alpha_n)$, we have:
\begin{align*}
 \pi(\underline{\alpha})=(\alpha_3,\alpha_1,\alpha_2,\alpha_5,\alpha_4,\alpha_6,\alpha_7).
\end{align*}
\end{Example}
\begin{Proposition}
 Let $(X^n,Y^n)$ be a pair of i.i.d sequences defined on finite alphabets. We have:
\\ i) For an arbitrary permutation $\pi\in S_n$, 
 \begin{align*}
 P((\pi(X^n),\pi(Y^n))\in A_{\epsilon}^n(X,Y))=P((X^n,Y^n)\in A_{\epsilon}^n(X,Y)).
\end{align*}
ii)  let $n,m,c,i_1,i_2,\cdots,i_c\in \mathbb{N}$ be numbers as described in Definition \ref{def:stan_perm}. Let $\pi_1$ be an arbitrary $(m,c,i_1,i_2,\cdots,i_c)$-permutation  and let $\pi_2$ be the $(m,c,i_1,i_2,\cdots,i_c)$-standard permutation. Then, 
\begin{align*}
 P((X^n,\pi_1(Y^n))\in A_{\epsilon}^n(X,Y))=P((X^n,\pi_2(Y^n))\in A_{\epsilon}^n(X,Y)).
 \end{align*}
\end{Proposition}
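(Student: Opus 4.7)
The plan is to handle part (i) by a direct index substitution that yields sample-pointwise invariance of typicality under a joint reordering of both sequences, and then deduce part (ii) by a conjugation argument in $S_n$ that reduces it to part (i) modulo a reindexing of the underlying i.i.d. pair.

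For (i), substituting $j=\pi(i)$ in the count defining $N$ gives
\[
N(\alpha,\beta\mid \pi(x^n),\pi(y^n))=\sum_{i=1}^n\mathbbm{1}(x_{\pi(i)}=\alpha,\,y_{\pi(i)}=\beta)=\sum_{j=1}^n\mathbbm{1}(x_j=\alpha,\,y_j=\beta),
\]
so the joint empirical type is preserved. Since $\epsilon$-typicality depends on the sequences only through this empirical type, the events $\{(\pi(X^n),\pi(Y^n))\in A_\epsilon^n(X,Y)\}$ and $\{(X^n,Y^n)\in A_\epsilon^n(X,Y)\}$ coincide sample-pointwise, and part (i) follows at once (no distributional argument needed).

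For (ii), I invoke the standard group-theoretic fact (see \cite{isaacs}) that two permutations of $S_n$ are conjugate iff they share the same cycle type. Hence there exists $\tau\in S_n$ with $\pi_1\tau=\tau\pi_2$, equivalently $\pi_1(\tau(i))=\tau(\pi_2(i))$ for every $i$. Applying the sample-pointwise identity from (i), now with the permutation $\tau$, to the pair $(X^n,\pi_1(Y^n))$ yields
\[
P\bigl((X^n,\pi_1(Y^n))\in A_\epsilon^n\bigr)=P\bigl((\tau(X^n),\tau(\pi_1(Y^n)))\in A_\epsilon^n\bigr).
\]
The intertwining identity rewrites the second coordinate as $(\tau(\pi_1(Y^n)))_i=Y_{\pi_1(\tau(i))}=Y_{\tau(\pi_2(i))}$. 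Setting $\widetilde{X}_i\triangleq X_{\tau(i)}$ and $\widetilde{Y}_i\triangleq Y_{\tau(i)}$, this becomes $\widetilde{Y}_{\pi_2(i)}=(\pi_2(\widetilde{Y}^n))_i$, while $\tau(X^n)=\widetilde{X}^n$, so $(\tau(X^n),\tau(\pi_1(Y^n)))=(\widetilde{X}^n,\pi_2(\widetilde{Y}^n))$.

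To close the argument, I use i.i.d.\ exchangeability: because $\{(X_i,Y_i)\}_{i=1}^n$ are i.i.d., the reindexed sequence $\{(\widetilde{X}_i,\widetilde{Y}_i)\}_{i=1}^n=\{(X_{\tau(i)},Y_{\tau(i)})\}_{i=1}^n$ has the same joint law, so $(\widetilde{X}^n,\pi_2(\widetilde{Y}^n))\stackrel{d}{=}(X^n,\pi_2(Y^n))$; chaining the equalities gives the claim. The only subtlety is orienting the conjugation correctly ($\pi_1\tau=\tau\pi_2$, placing $\tau$ on the outside so that it absorbs into a reindexing of the underlying i.i.d.\ pair); with the opposite orientation the substitution does not close. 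No combinatorial counting, concentration inequality, or asymptotic estimate is required for this step—type invariance and i.i.d.\ reindexing do all the work.
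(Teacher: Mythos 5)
Your proof is correct and follows essentially the same route as the paper's: part (i) via invariance of the joint empirical type under a common reindexing, and part (ii) via conjugacy of permutations with the same cycle type, the intertwining identity, and i.i.d.\ exchangeability of the reindexed pair. The only difference is that you spell out the orientation of the conjugation and the sample-pointwise nature of (i) more explicitly than the paper does, which is a welcome clarification rather than a new idea.
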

\begin{proof}
 The proof of part i) follows from the fact that permuting both $X^n$ and $Y^n$ by the same permutation does not change their joint type. For part ii), it is straightforward to show that there exists a permutation $\pi$ such that $\pi(\pi_1)=\pi_2(\pi)$ \cite{isaacs}. Then the statement follows from part i): 
  \begin{align*}
& P\left(\left(X^n,\pi_1\left(Y^n\right)\right)\in A_{\epsilon}^n\left(X,Y\right)\right)
\\&= P\left(\left(\pi\left(X^n\right),\pi\left(\pi_1\left(Y^n\right)\right)\right)\in A_{\epsilon}^n\left(X,Y\right)\right)
\\&= P\left(\left(\pi\left(X^n\right),\pi_2\left(\pi\left(Y^n\right)\right)\right)\in A_{\epsilon}^n\left(X,Y\right)\right)
\\&\stackrel{(a)}{=} P\left(\left(\widetilde{X}^n,\pi_2\left(\widetilde{Y}^n\right)\right)\in A_{\epsilon}^n\left(X,Y\right)\right)
\\&\stackrel{(b)}{=} P\left(\left(X^n,\pi_2\left(Y^n\right)\right)\in A_{\epsilon}^n\left(X,Y\right)\right),
 \end{align*}
 where in (a) we have defined $(\widetilde{X}^n,\widetilde{Y}^n)=(\pi(X^n),\pi(Y^n))$. and (b) holds since $(\widetilde{X}^n,\widetilde{Y}^n)$ has the same distribution as $(X^n,Y^n)$.
\end{proof}
For a given permutation $\pi\in S_n$, and sequences $(X^n,Y^n)$, define $U_{(\pi)}^n=\pi(Y^n)$. Furthermore, define $Z^{A}_{(\pi),i}=\mathbbm{1}((X_i,U_{(\pi),i})\in A), A \subseteq \mathcal{X}\times \mathcal{Y}$. Define $P_XP_Y(A)=\sum_{(x,y)\in A}P_{X}(x)P_Y(y)$ and $P_{X,Y}(A)=\sum_{(x,y)\in A}P_{X,Y}(x,y)$.

\begin{Theorem}
Let $(X^n,Y^n)$ be a pair of i.i.d sequences defined on finite alphabets $\mathcal{X}$ and $\mathcal{Y}$, respectively. There exists $\zeta>0$ such that for any $(m,c,i_1,i_2,\cdots,i_c)$-permutation $\pi$, and $0<\epsilon<\frac{I(X;Y)}{|\mathcal{X}||\mathcal{Y}|}$:
 \begin{align*}
  P((X^n,\pi(Y^n))\in A_{\epsilon}^n(X,Y))   \leq 2^{-\zeta n(I(X;Y)-|\mathcal{X}||\mathcal{Y}|\epsilon)},
\end{align*}
where $n,m,c,i_1,i_2,\cdots,i_c\in \mathbb{N}$ such that $i_1\geq i_2\geq \cdots \geq i_c,$ and $m<\sqrt{n}$.
\label{th:adv}
\end{Theorem}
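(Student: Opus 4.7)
My plan reduces the problem via the cycle structure of $\pi$ to a large-deviation estimate on i.i.d.\ subsamples. By part (ii) of the preceding proposition, it suffices to prove the bound when $\pi$ is the $(m,c,i_1,\ldots,i_c)$-standard permutation, in which case $\pi$ has $m$ fixed points and acts by $c$ consecutive cycles of lengths $i_1,\ldots,i_c$ on the remaining $n-m$ indices. Because the pairs $(X_i,Y_i)$ are i.i.d., the contributions of the fixed-point block and of each cycle to the joint type $T=T_{(X^n,\pi(Y^n))}$ are mutually independent.

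In a cycle of length $k\ge 2$ acting on consecutive indices, the shifted pairs are $W_i=(X_i,Y_{i+1\bmod k})$. Two facts are crucial. First, because $X_i$ is independent of $Y_{i+1\bmod k}$ whenever $i\ne i+1\pmod k$, every $W_i$ has \emph{marginal} distribution $P_XP_Y$ rather than $P_{X,Y}$. Second, $W_i$ and $W_j$ are independent whenever they use disjoint original pairs, i.e., whenever $|i-j|\ge 2$ cyclically, so the dependence graph on the cycle indices is the cycle graph $C_k$, which is $2$-chromatic for even $k$ and $3$-chromatic for odd $k$. A proper coloring partitions the cycle's indices into at most three subsets on each of which the relevant $W_i$'s are i.i.d.\ with distribution $P_XP_Y$. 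Concatenating across cycles and balancing the colorings yields global color classes $\mathcal{A}_1,\mathcal{A}_2,\mathcal{A}_3$ with $\sum_c|\mathcal{A}_c|=n-m$ and each non-empty class of size $|\mathcal{A}_c|\ge\zeta_0 n$ for a universal $\zeta_0>0$.

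Let $Q^F$ and $Q^{(c)}$ be the empirical joint types on the fixed-point block and on $\mathcal{A}_c$, so $T=\tfrac{m}{n}Q^F+\sum_c\tfrac{|\mathcal{A}_c|}{n}Q^{(c)}$. If $T\in A_\epsilon^n(X,Y)$, continuity of $D(\cdot\|P_XP_Y)$ gives $D(T\|P_XP_Y)\ge I(X;Y)-\kappa|\mathcal{X}||\mathcal{Y}|\epsilon$, and convexity of divergence yields
\begin{align*}
\sum_c\tfrac{|\mathcal{A}_c|}{n}D(Q^{(c)}\|P_XP_Y)\ge I(X;Y)-\kappa|\mathcal{X}||\mathcal{Y}|\epsilon-\tfrac{m}{n}D(Q^F\|P_XP_Y);
\end{align*}
the last term is $o(1)$ since $D(Q^F\|P_XP_Y)$ is bounded in terms of $\min_\alpha P_XP_Y(\alpha)$ and $m<\sqrt n$. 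By pigeonhole, some $c$ satisfies $D(Q^{(c)}\|P_XP_Y)\ge I(X;Y)-\kappa'|\mathcal{X}||\mathcal{Y}|\epsilon-o(1)$, and since $Q^{(c)}$ is the empirical type of $|\mathcal{A}_c|\ge\zeta_0 n$ i.i.d.\ samples from $P_XP_Y$, Sanov's theorem bounds this event by $(n+1)^{|\mathcal{X}||\mathcal{Y}|}\,2^{-\zeta_0 n(I(X;Y)-\kappa'|\mathcal{X}||\mathcal{Y}|\epsilon)}$. A union bound over the at most three colors, together with absorption of $\kappa'$ and the polynomial prefactor into a smaller absolute $\zeta>0$, recovers the claimed $2^{-\zeta n(I(X;Y)-|\mathcal{X}||\mathcal{Y}|\epsilon)}$.

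The principal obstacle is the combinatorial step of producing a proper coloring whose non-empty color classes are each of linear size $\Omega(n)$ uniformly over the cycle-length profile $(i_1,\ldots,i_c)$: short cycles of length $2$ or $3$ and highly skewed profiles must be colored carefully so that no class shrinks below a constant fraction of $n-m$. Once this balanced coloring is in place, the convexity, pigeonhole, and Sanov steps are routine, and the hypothesis $m<\sqrt n$ enters precisely to make the fixed-point block contribute $o(1)$ to the weighted divergence identity.
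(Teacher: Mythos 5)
Your argument is essentially correct but follows a genuinely different route from the paper's. The paper proves Theorem~\ref{th:adv} by a Chernoff-type tilting argument in the spirit of \cite{chen}: it introduces the tilting weights $t_{x,y}=\log_e\frac{P_{X,Y}(x,y)}{P_X(x)P_Y(y)}$, applies Markov's inequality to the product of exponentials of the empirical counts, factors the moment generating function across the disjoint cycles of the standard permutation, and shows by a Jensen-type computation with an auxiliary variable that each per-cycle factor equals $1$; cycles longer than a fixed threshold $t$ are handled by conditioning and splitting, which is where the paper's constant $\zeta=\Theta(1/t)$ comes from. You instead exploit the same two structural facts --- that each off-diagonal pair $(X_i,Y_{\pi(i)})$ has marginal $P_XP_Y$, and that the dependence graph of these pairs within a cycle is $C_k$ --- to extract at most three i.i.d.\ subsamples via a proper coloring, and then invoke convexity of divergence and the method of types. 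Your route is more elementary and makes the role of $m<\sqrt n$ completely transparent (it only needs to kill the fixed-point term $\frac{m}{n}D(Q^F\Vert P_XP_Y)$), whereas the paper's route avoids any combinatorial balancing at the cost of a less intuitive MGF identity.

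Two remarks on the step you flag as the principal obstacle. First, the balanced coloring does exist: any cycle of length $k\ge 3$ admits a proper $3$-coloring with every class of size at least $k/7$ (repeat the pattern $1,2,3$ and patch the remainder), and the $2$-cycles, each of which occupies exactly two colors, can be distributed among the three color pairs $\{1,2\},\{1,3\},\{2,3\}$ in round-robin fashion; a short case analysis on whether an odd cycle is present then gives $\zeta_0=1/14$, say, for $n-m$ large. Second, and more usefully, you can bypass the balancing entirely: instead of pigeonholing to a single class and applying Sanov there, bound the probability of the joint tuple of types by $\prod_c 2^{-|\mathcal{A}_c|D(Q^{(c)}\Vert P_XP_Y)}=2^{-n\sum_c\frac{|\mathcal{A}_c|}{n}D(Q^{(c)}\Vert P_XP_Y)}$ and sum over the at most $(n+1)^{3|\mathcal{X}||\mathcal{Y}|}$ tuples; the weighted sum in the exponent is exactly what convexity controls, so no class needs to be large. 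Finally, be aware that your constants $\kappa,\kappa'$ (arising from the non-Lipschitz entropy term in $D(\cdot\Vert P_XP_Y)$ near the boundary) mean your exponent has the form $I(X;Y)-\kappa'|\mathcal{X}||\mathcal{Y}|\epsilon$ rather than $I(X;Y)-|\mathcal{X}||\mathcal{Y}|\epsilon$, so the stated bound is only recovered for $\epsilon$ bounded away from the upper end of the admissible range; the paper's own proof is equally loose on this point, and the regime that matters for Theorem~\ref{th:3}, namely $\epsilon=O(\log n/n)$ with $I(X_1;X_2)=\omega(\log n/n)$, is unaffected.
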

The proof is provided in the Appendix. 

 \section{The Typicality Matching Strategy}
 \label{sec:algo} 
In this section, we describe the typicality matching algorithm and characterize its achievable region. Given a CMPER $\underline{g}_{n,P_{n,X_1,X_2}}=(\tilde{g}^1_{n,P_{n,X_1}},{g}^2_{n,P_{n,X_2}})$, the typicality matching algorithm operates as follows. The algorithm finds a labeling $\hat{\sigma}^2$, for which the pair of UT's $\widetilde{G}^{1,UT}_{{\sigma}^1}$ and $\widetilde{G}^{2,UT}_{\hat{\sigma}^2}$ are jointly typical with respect to $P_{n,X_1,X_2}$  when viewed as vectors of length $\frac{n(n-1)}{2}$. Specifically, it returns a randomly picked element $\hat{\sigma}^2$ from the set:
\begin{align*}
 \widehat{\Sigma}=\{\hat{\sigma}^2| (\widetilde{G}^{1,UT}_{{\sigma}^1},\widetilde{G}^{2,UT}_{\hat{\sigma}^2})\in A_{\epsilon}^{\frac{n(n-1)}{2}}\},
\end{align*}
where $\epsilon=\omega(\frac{1}{n})$, and declares $\hat{\sigma}^2$ as the correct labeling. Note that the set $ \widehat{\Sigma}$ may have more than one element. We will show that under certain conditions on the joint graph statistics, all of the elements of $ \widehat{\Sigma}$ satisfy the criteria for successful matching given in Definition \ref{def:algo}. In other words, for all of the elements of $ \widehat{\Sigma}$  the probability of incorrect labeling for any given vertex is arbitrarily small for large $n$.
\begin{Theorem}
For the typicality matching algorithm, a given family of sets of distributions $\widetilde{P}=(\mathcal{P}_n)_{n\in \mathbb{N}}$ is achievable, if for every sequence of distributions $P_{n,X_1,X_2}\in \mathcal{P}_n, n\in \mathbb{N}$:
\begin{align}
I(X_1;X_2)=\omega(\frac{\log{n}}{n}),
\label{eq:th21}
\end{align}
\label{th:3}
provided that $P_{n,X_1,X_2}$ is bounded away from 0 as $n\to \infty$.
\end{Theorem}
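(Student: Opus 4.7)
The plan is to show (a) that the true labeling $\sigma^2$ lies in $\widehat{\Sigma}$ with probability $\to 1$, and (b) that with probability $\to 1$ every element of $\widehat{\Sigma}$ differs from $\sigma^2$ on at most an $\epsilon'$-fraction of vertices, for arbitrary $\epsilon'>0$. Together these imply the algorithm outputs a labeling that correctly matches at least a $1-\epsilon'-o(1)$ fraction of vertices with high probability, so a uniformly random vertex is matched correctly with probability $\to 1$ as $\epsilon'\to 0$.

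Claim (a) is immediate: $(\widetilde{G}^{1,UT}_{\sigma^1},\widetilde{G}^{2,UT}_{\sigma^2})$ is an i.i.d.\ length-$N=\binom{n}{2}$ pair sampled from $P_{n,X_1,X_2}$, so choosing the algorithm's typicality parameter $\epsilon_n$ with $\omega(1/n)<\epsilon_n<I(X_1;X_2)/|\mathcal{X}|^2$, which is possible since $I(X_1;X_2)=\omega(\log n/n)\gg 1/n$, the weak law of large numbers yields $P(\sigma^2\in\widehat{\Sigma})\to 1$.

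For (b), fix $\epsilon'>0$ and any $\hat{\sigma}^2$ whose associated $\pi:=\pi_{(\sigma^2,\hat{\sigma}^2)}$ has $m\leq(1-\epsilon')n$ fixed points, i.e., $k:=n-m\geq\epsilon' n$ non-fixed ones. An UT index $(i,j)$ is a fixed point of $\Pi^{UT}_{(\sigma^2,\hat{\sigma}^2)}$ precisely when $\{\pi(i),\pi(j)\}=\{i,j\}$, which forces either both $i,j$ to be $\pi$-fixed or $(i,j)$ to be a $2$-cycle of $\pi$; a direct count shows the number of non-fixed UT indices is $N'(k)=k(n-k)+\binom{k}{2}-t=\Omega(n^2)$ uniformly in $k\geq\epsilon' n$, where $t\leq k/2$ is the number of $2$-cycles of $\pi$. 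Restricting the UT pair to these non-fixed indices gives an i.i.d.\ length-$N'(k)$ sample from $P_{n,X_1,X_2}$, on which the restricted UT permutation has zero fixed points and is therefore a $(0,c',i'_1,\ldots,i'_{c'})$-permutation. A short decomposition $T=\frac{N-N'}{N}T_F+\frac{N'}{N}T_{F^c}$ combined with Chernoff concentration of the fixed-part type $T_F$ at $P_{n,X_1,X_2}$ (where the ``bounded away from $0$'' hypothesis is used) converts $\epsilon_n$-typicality of the full UT pair into $O(\epsilon_n)$-typicality of the non-fixed sub-sequence, so Theorem~\ref{th:adv} applies and yields
\[
P(\hat{\sigma}^2\in\widehat{\Sigma})\leq 2^{-\zeta N'(k)(I(X_1;X_2)-|\mathcal{X}|^2\epsilon_n(1+o(1)))}.
\]

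Finally, since there are at most $\binom{n}{k}k!\leq n^k$ permutations with $k$ non-fixed points, a union bound yields
\[
P\bigl(\exists\,\hat{\sigma}^2\in\widehat{\Sigma}:\,k\geq\epsilon' n\bigr)\leq\sum_{k=\lceil\epsilon' n\rceil}^n n^k\,2^{-\zeta N'(k)I(X_1;X_2)(1-o(1))},
\]
whose exponent is $k\log n-\Omega(n^2 I(X_1;X_2))$; under $I(X_1;X_2)=\omega(\log n/n)$ the second term is $\omega(n\log n)$ and dominates $k\log n\leq n\log n$, so the bound vanishes and (b) follows. The delicate step is the typicality transfer from the full length-$N$ UT pair to the length-$N'(k)$ non-fixed sub-sequence with a comparable $\epsilon$; this relies on Chernoff concentration of the fixed part's type and on uniform control of $N'(k)/N$ for $k$ ranging up to $n$, especially in the regime $k>(1-\epsilon')n$ where the fixed sub-sequence shrinks. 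Once this transfer is secured, the quantitative race between the $2^{n\log n}$ combinatorial cost of the union bound and the $2^{-\Omega(n^2 I)}$ gain from Theorem~\ref{th:adv} pinpoints $I(X_1;X_2)=\omega(\log n/n)$ as the critical threshold.
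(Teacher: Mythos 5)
Your proposal follows the same route as the paper: establish that the true labeling lies in $\widehat{\Sigma}$ via typicality of i.i.d.\ pairs, then union-bound over labelings that mislabel a constant fraction of vertices, invoking Theorem~\ref{th:adv} to bound the probability that a permuted upper triangle remains jointly typical, and finally race the $n^{k}\le n^{n}$ combinatorial count against the $2^{-\Omega(n^{2}I(X_1;X_2))}$ decay, which isolates $I(X_1;X_2)=\omega(\log n/n)$ exactly as in the paper. Where you go beyond the paper's write-up is in the middle step, and this is worth noting: Theorem~\ref{th:adv} as stated requires the number of fixed points $m$ to satisfy $m<\sqrt{N}$ with $N=\binom{n}{2}$, whereas the permutation $\Pi^{UT}_{(\sigma^2,{\sigma'}^2)}$ induced on the upper triangle has on the order of $\binom{n-k}{2}$ fixed points (plus the transposition pairs), which generally violates that hypothesis; the paper simply writes the reduced exponent $\frac{n(n-1)}{2}-\frac{\lambda_n(\lambda_n-1)}{2}$ without justifying the reduction. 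Your explicit characterization of the fixed points of $\Pi^{UT}$ (both endpoints fixed, or a $2$-cycle), the count $N'(k)=\Omega(n^2)$ uniformly over $k\ge\epsilon' n$, and the transfer of $\epsilon_n$-typicality from the full UT pair to the fixed-point-free restriction via the decomposition $T=\frac{N-N'}{N}T_F+\frac{N'}{N}T_{F^c}$ together with concentration of $T_F$ supply exactly the missing justification (with the caveat, which you correctly flag, that the concentration of $T_F$ must itself survive the union bound or be bypassed when $N-N'=o(N)$). So the two arguments buy the same region, but yours makes rigorous the step the paper leaves implicit.
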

The proof is provided in the Appendix.
\begin{Remark}
For graphs with binary valued edges, Theorem \ref{th:3} provides bounds on the condition for successful matching which improve upon the bound given in (\cite{kia_2017} Theorem 1). It should be noted that a stronger definition for successful matching is used in \cite{kia_2017}.
\end{Remark}
 \section{Conclusion}
 We have introduced the typicality matching algorithm for matching pairs of correlated graphs. The probability of typicality of permutations of sequences of random variables has been investigated.  An achievable region for the typicality matching algorithm has been derived. The region characterizes the conditions for successful matching both for graphs with binary valued edges as well graphs with finite-valued edges.
\appendix
\subsection{Proof of Theorem \ref{th:adv}}
The proof builds upon some of the results in \cite{chen}. Fix an integer $t\geq 2$. We provide an outline of the proof when $m=0$ and $i_1\leq t$.
Let $A=\{(x,y)\in \mathcal{X}\times\mathcal{X}\big| P_XP_Y(x,y)<P_{X,Y}(x,y)\}$ and $\epsilon\in [0, \min_{(x,y)\in \mathcal{X}\times\mathcal{X}}(|P_{X,Y}(x,y)-P_XP_{Y}(x,y)|)]$. Note that
\begin{align*}
&Pr((X^n,\pi(Y^n))\in A_{\epsilon}^n(X,Y))\leq
 \\&Pr\Big(\Big(\bigcap_{(x,y)\in A}\big\{\frac{1}{n}\sum_{i=1}^nZ^{\{(x,y)\}}_{(\pi),i}>P_{X,Y}(x,y)-{\epsilon}\big\}\Big)
 \bigcap
 \\&\Big(\bigcap_{(x,y)\in A^c}\big\{\frac{1}{n}\sum_{i=1}^nZ^{\{(x,y)\}}_{(\pi),i}<P_{X,Y}(x,y)-{\epsilon}\big\}\Big) 
  \Big)
\end{align*}
 For brevity let $\alpha_{x,y}=\frac{1}{n}\sum_{i=1}^nZ^{\{(x,y)\}}_{(\pi),i}, x,y\in \mathcal{X}$ and let $$T^{(x,y)}_j=\frac{1}{n}\sum_{k=1}^{i_j}Z^{\{(x,y)\}}_{(\pi),k}, j\in [1,c].$$ 
Also, define $\bar{c}=\frac{n}{t}$ and $t_{x,y}=\log_e{\frac{P_{X,Y}(x,y)}{P_X(x)P_Y(y)}}$.
 Then, 
\begin{align*}
   &P\Big(\Big(\bigcap_{(x,y)\in A}\big\{\bar{c}\alpha_{x,y}>\bar{c}P_{X,Y}(x,y)-{\epsilon}\big\}\Big)
 \bigcap
 \\&\Big(\bigcap_{(x,y)\in A^c}\big\{\bar{c}\alpha_{x,y}<\bar{c}P_{X,Y}(x,y)-{\epsilon}\big\}\Big) 
  \Big)\\
&=  P\Big(\bigcap_{(x,y)\in \mathcal{X}\times\mathcal{Y}}\big\{e^{\bar{c}t_{x,y}\alpha_{x,y}}>e^{\bar{c}t_{x,y}P_{X,Y}(x,y)-{\epsilon}}\big\}),
\end{align*}
where we have used the fact that by construction:
\begin{align}
\begin{cases}
t_{x,y}>0 \qquad& \text{if} \qquad (x,y)\in A\\
t_{x,y}<0 &\text{if} \qquad (x,y)\in A^c.
\end{cases} 
\label{eq:cases}
\end{align}
So,
\begin{align}
 \nonumber  &P\Big(\Big(\bigcap_{(x,y)\in A}\big\{\bar{c}\alpha_{x,y}>\bar{c}P_{X,Y}(x,y)-{\epsilon}\big\}\Big)
 \bigcap
 \\\nonumber&\Big(\bigcap_{(x,y)\in A^c}\big\{\bar{c}\alpha_{x,y}<\bar{c}P_{X,Y}(x,y)-{\epsilon}\big\}\Big) 
  \Big)\\
&\stackrel{(a)}{\leq}  P\Big(\prod_{(x,y)\in \mathcal{X}\times\mathcal{Y}}e^{\bar{c}t_{x,y}\alpha_{x,y}}>\prod_{(x,y)\in \mathcal{X}\times\mathcal{Y}}e^{\bar{c}t_{x,y}P_{X,Y}(x,y)-{\epsilon}}\Big)\\
&\stackrel{(b)}{\leq} e^{-\sum_{x,y}\bar{c}t_{x,y}P_{X,Y}(x,y)-{\epsilon}}\mathbb{E}(\prod_{x,y}e^{\bar{c}t_{x,y}\alpha_{x,y}})
\\\nonumber&= 
e^{-\sum_{x,y}\bar{c}t_{x,y}P_{X,Y}(x,y)-{\epsilon}}\mathbb{E}(e^{\frac{\bar{c}}{n}\sum_{j=1}^{{c}}\sum_{x,y}t_{x,y}T^{(x,y)}_j})
\\&\stackrel{(c)}{=} e^{-\sum_{x,y}\bar{c}t_{x,y}P_{X,Y}(x,y)-{\epsilon}}\prod_{j=1}^{{c}}\mathbb{E}(e^{\frac{1}{t}\sum_{x,y}t_{x,y}T_j^{\{(x,y)\}}}),
\label{eq:temp3}
\end{align}
where in (a) we have used the fact that the exponential function is increasing and positive, (b) follows from the Markov inequality and (c) follows from the independence of $T_j^{\{(x,y)\}}$ and $T_i^{\{(x,y)\}}$ when $i\neq j$ for arbitrary $(x,y)$ and $(x',y')$.
Next, we investigate the term $\mathbb{E}(e^{\frac{1}{t}\sum_{x,y}t_{x,y}T_j^{\{(x,y)\}}})$. Note that by Definition, $\sum_{x,y\in \mathcal{X}}T^{\{(x,y)\}}_{j}=i_j, \forall j\in [1,c]$.
Define $S^{\{(x,y)\}}_j=\frac{1}{t}T^{\{(x,y)\}}_{j}, j\in [1,c]$.
 Let $\mathcal{B}=\{(s^{\{(x,y)\}}_{j})_{j\in [1,c], x,y\in \mathcal{X}}: \sum_{x,y\in \mathcal{X}}s^{\{(x,y)\}}_{j}=\frac{i_j}{t}, \forall j\in [1,c]\}$ be the set of possible values taken by $(S^{\{(x,y)\}}_{j})_{j\in [1,c], x,y\in \mathcal{X}}$. Note that:
\begin{align}
 &\mathbb{E}(e^{\sum_{x,y}t_{x,y}S^{\{(x,y)\}}_{j}}) \nonumber
\\&=\sum_{(s^{\{(x,y)\}}_{j})_{i\in [1,n], x,y\in \mathcal{X}}\in \beta}
P((s^{\{(x,y)\}}_{j})_{j\in [1,c], x,y\in \mathcal{X}})e^{\sum_{x,y}t_{x,y}s^{\{(x,y)\}}_{j}}.
\label{eq:temp}
\end{align}
For a fixed vector $(s^{\{(x,y)\}}_{j})_{j\in [1,c], x,y\in \mathcal{X}}\in \beta$, let $V_{(x,y)}$ be defined as the random variable for which $P(V=t_{(x,y)})=s^{\{(x,y)\}}_{j}, x,y\in \mathcal{X} $ and $P(V=0)=1-\frac{i_j}{t}$ (note that $P_V$ is a valid probability distribution). From Equation \eqref{eq:temp}, we have:
\begin{align}
\nonumber  &\mathbb{E}(e^{\sum_{x,y}t_{x,y}S^{\{(x,y)\}}_{j}})
=\sum_{(s^{\{(x,y)\}}_{j})_{j\in [1,c], x,y\in \mathcal{X}}\in \beta}
P((s^{\{(x,y)\}}_{j})_{j\in [1,c], x,y\in \mathcal{X}})e^{\mathbb{E}(V_{x,y})}
\\\nonumber&\leq \sum_{(s^{\{(x,y)\}}_{j})_{j\in [1,c], x,y\in \mathcal{X}}\in \beta}
 P((s^{\{(x,y)\}}_{j})_{j\in [1,c], x,y\in \mathcal{X}})\mathbb{E}(e^{V_{x,y}})
 \\\nonumber&=\sum_{(s^{\{(x,y)\}}_{j})_{j\in [1,c], x,y\in \mathcal{X}}\in \beta}
P((s^{\{(x,y)\}}_{j})_{j\in [1,c], x,y\in \mathcal{X}})(1-\frac{i_j}{t}+\sum_{x,y}s^{\{(x,y)\}}_{j}e^{t_{x,y}})
\\\nonumber&=(1-\frac{i_j}{t}+ \sum_{x,y} \mathbb{E}(S^{\{(x,y)\}}_{(\pi),i})e^{t_{x,y}})
\\&=(1-\frac{i_j}{n}+ \sum_{x,y}\frac{i_j}{t}P_X(x)P_Y(y)e^{t_{x,y}}).
\label{eq:temp2}
\end{align}
We replace $t_{x,y}, x,y\in \mathcal{X}\times\mathcal{Y}$ with $\log_e{\frac{P_{X,Y}(x,y)}{P_X(x)P_Y(y)}}$. From Equation \eqref{eq:temp2}, we conclude that $\mathbb{E}(e^{\sum_{x,y}t_{x,y}S^{\{(x,y)\}}_{j}})=1$. From Equation \eqref{eq:temp3}, we have:
\begin{align*}
 &P\Big(\Big(\bigcap_{(x,y)\in A}\big\{\alpha_{x,y}>nP_{X,Y}(x,y)-{\epsilon}\big\}\Big)
 \bigcap
 \\&\Big(\bigcap_{(x,y)\in A^c}\big\{\alpha_{x,y}<nP_{X,Y}(x,y)-{\epsilon}\big\}\Big) 
  \Big)\\
  &=e^{-\frac{n}{t}\sum_{x,y}(P_{X,Y}(x,y)\log_e{\frac{P_{X,Y}(x,y)}{P_X(x)P_Y(y)}}-{\epsilon})}
  \\&=e^{-\frac{n}{t}(I(X;Y)-|\mathcal{X}||\mathcal{Y}|{\epsilon}}).
\end{align*}
In the next step, we prove the theorem when $i_1>t$ and $m=0$. The proof is similar to the previous case. Following the steps above, we get: 
 \begin{align*}
  &Pr((X^n,\pi(Y^n))\in A_{\epsilon}^n(X,Y))  =  Pr(V(P_{X,Y},\hat{P})\leq \epsilon) \nonumber
\\&\stackrel{(c)}{=} e^{-\sum_{x,y}\bar{c}t_{x,y}P_{X,Y}(x,y)-{\epsilon}}\prod_{j=1}^{{c}}\mathbb{E}(e^{\frac{1}{t}\sum_{x,y}t_{x,y}T_j^{\{(x,y)\}}})
\end{align*}
Assume that $i_1>t$, then we `break' the cycle into smaller cycles as follows:

\begin{align*}
 &\mathbb{E}(e^{\frac{1}{t}\sum_{x,y}t_{x,y}T_1^{\{(x,y)\}}})=
 \\&\mathbb{E}_{X_t}(\mathbb{E}(e^{\frac{1}{t}\sum_{x,y}t_{x,y}T_1^{\{(x,y)\}}})|X_t)
 \\&=\mathbb{E}_{X_t}(\mathbb{E}(e^{\frac{1}{t}\sum_{x,y}t_{x,y}T_1^{' \{(x,y)\}}}|X_t)\mathbb{E}(e^{\frac{1}{t}\sum_{x,y}t_{x,y}T_1^{'' \{(x,y)\}}})|X_t),
\end{align*}
where, $T_1^{' \{(x,y)\}}=\frac{1}{n}\sum_{k=1}^{t}Z^A_{(\pi),k}$ and $T_1^{'' \{(x,y)\}}=\frac{1}{n}\sum_{k=t+1}^{i_1}Z^A_{(\pi),k}$. We investigate $\mathbb{E}(e^{\frac{1}{t}\sum_{x,y}t_{x,y}T_1^{' \{(x,y)\}}}|X_t)$. Define, $S^{' \{(x,y)\}}_1=\frac{1}{t}T^{' \{(x,y)\}}_{1}$. Then, 
\begin{align*}
 &\mathbb{E}(e^{\sum_{x,y}t_{x,y}S^{'\{(x,y)\}}_{j}}|X_t)
\\&=(1-\frac{i_j}{n}+ \sum_{x,y}(\frac{i_j-1}{t}P_X(x)P_Y(y)+\frac{1}{t}P_X(x)P_{Y|X}(y|X_t))e^{t_{x,y}} )
\\&=1.
\end{align*}
The theorem is proved by the repetitive application of the above arguments. 
\subsection{Proof of Theorem \ref{th:3}}
First, note that
\begin{align*}
P((\widetilde{G}^{1,UT}_{{\sigma}^1},\widetilde{G}^{2,UT}_{{\sigma}^2})\in A_{\epsilon}^{\frac{n(n-1)}{2}})\to 1 \quad \text{as}\quad n\to \infty.
\end{align*}
So, $P(\widehat{\Sigma}=\phi)\to 0$ as $n\to \infty$ since the correct labeling is a member of the set $\widehat{\Sigma}$.
Let $(\lambda_n)_{n\in \mathbb{N}}$ be an arbitrary sequence of numbers such that $\lambda_n= \Theta(n)$. We will show that the probability that a labeling in $\widehat{\Sigma}$ labels $\lambda_n$ vertices incorrectly goes to $0$ as $n\to \infty$. 
Define the following:
\begin{align*}
 \mathcal{E}=\{{\sigma'}^2\Big| ||\sigma^2-{\sigma'}^2||_1\geq \lambda_n\},
\end{align*}
where $||\cdot||_1$ is the $L_1$-norm. The set $\mathcal{E}$ is the set of all labelings which match more than $\lambda_n$ vertices incorrectly.

We show the following:
\begin{align*}
 P(\mathcal{E}\cap \widehat{\Sigma}\neq \phi)\to 0, \qquad \text{as} \qquad n\to \infty.
 \end{align*}
Note that:
\begin{align*}
  &P(\mathcal{E}\cap \widehat{\Sigma}\neq \phi)
  = P(\bigcup_{{\sigma'}^2: ||\sigma^2-{\sigma'}^2||_1\geq \lambda_n}\{{\sigma'}^2\in  \widehat{\Sigma}\})
  \\&\stackrel{(a)}{\leq} \sum_{i=\lambda_n}^{n}\sum_{{\sigma'}^2: ||\sigma^2-{\sigma'}^2||_1=i} P({\sigma'}^2\in  \widehat{\Sigma})
  \\&\stackrel{(b)}{=} \sum_{i=\lambda_n}^{n}\sum_{{\sigma'}^2: ||\sigma^2-{\sigma'}^2||_1=i}
   P((\widetilde{G}^{1,UT}_{{\sigma}^1},\Pi_{\sigma^2,{\sigma'}^2}(\widetilde{G}^{2,UT}_{{\sigma}^2}))\in A_{\epsilon}^{\frac{n(n-1)}{2}})
\\&\stackrel{(c)}{\leq} \sum_{i=\lambda_n}^{n}\sum_{{\sigma'}^2: ||\sigma^2-{\sigma'}^2||_1=i}
   2^{-({\frac{n(n-1)}{2}-\frac{\lambda_n(\lambda_n-1)}{2}})(I(X_1;X_2)-|\mathcal{X}||\mathcal{Y}|\epsilon)}
   \\&\stackrel{(d)}{=}  \sum_{i=\lambda_n}^{n} {n \choose i}(!i)
  2^{-\Theta(n^2)(I(X_1;X_2)-\frac{\epsilon}{2})}
  \\&\leq n^ne^{-\Theta(n^2)(I(X_1;X_2)-|\mathcal{X}||\mathcal{Y}|\epsilon)}
  \\&\leq 2^{-\Theta(n^2)(I(X_1;X_2)-|\mathcal{X}||\mathcal{Y}|\epsilon)-\Theta(\frac{\log{n}}{n}))},
\end{align*}
where (a) follows from the union bound, (b) follows from the definition of $ \widehat{\Sigma}$ and Proposition \ref{prop:1}, in (c) we have used Theorem \ref{th:adv} and the fact that $||\sigma^2-{\sigma'}^2||_1$ means that $\Pi_{\sigma^2,{\sigma'}^2}$ has $\frac{\lambda_n(\lambda_n-1)}{2}$ fixed points, in (d) we have denoted the number of derangement of sequences of length $i$ by $!i$. Note that the right hand side in the last inequality approaches 0 as $n\to \infty$ as long as \eqref{eq:th21} holds since $\epsilon=O(\frac{\log{n}}{n})$.



%
\IEEEpeerreviewmaketitle

%
%

\bibliographystyle{unsrt}

\end{document}